\def\diag{\mathrm{diag}\mathop}
\newtheorem{definition}{Definition}
\newtheorem{example}{Example}
\newtheorem{theorem}{Theorem}
\newtheorem{lemma}[theorem]{Lemma}
\newtheorem{corollary}[theorem]{Corollary}
\begin{document}
\title{Tight lower bound for percolation threshold on a
  quasi-regular graph}

\author{Kathleen E. Hamilton} 
\affiliation{Department of Physics \&
  Astronomy, University of California, Riverside, California, 92521,
  USA} 
\author{Leonid P. Pryadko} 
\affiliation{Department of Physics \&
  Astronomy, University of California, Riverside, California, 92521,
  USA} 
\date\today
\begin{abstract}
  We construct an exact expression for the site percolation threshold
  $p_c$ on a quasi-regular tree ${\cal T}$, and a related exact lower
  bound for a quasi-regular graph ${\cal G}$.  Both are given by the
  inverse spectral radius of the appropriate Hashimoto matrix used to
  count non-backtracking walks.
  The obtained bound always exceeds the inverse spectral radius of the
  original graph, and it is also generally tighter than the existing
  bound in terms of the maximum degree.
\end{abstract}
\maketitle
An ability to process and store large amounts of information lead to
emergence of big data in many areas of research and applications.
This caused a renewed interest in graph theory as a tool for
describing complex connections in various kinds of networks:
social, biological, technological, 
etc.\cite{Albert-Jeong-Barabasi-2000,%
  Albert-Barabasi-RMP-2002,%
  Borner-Sanyal-Vspignani-ARIST-2007,%
  Danon-etal-2011,Costa-Oliveira-CorreaRocha-2011}
In particular, percolation transition on graphs has been used to
describe internet stability, spread of contagious
diseases, and 
emergence of viral videos.  Percolation has also been
applied to establish the existence of the decoding threshold in
certain classes of quantum error-correcting
codes\cite{Kovalev-Pryadko-FT-2013}.

A \emph{degree} of a vertex in a graph is the number of its neighbors.
Degree distribution is a characteristic easy to extract empirically.
A simple approach for network modeling is to study random graphs with
the given degree distribution\cite{Cohen-Erez-benAvraham-Havlin-2000,%
  Callaway-Newman-Strogatz-Watts-2000,Newman-Strogatz-Watts-2001}.  In
the absence of correlations, the site percolation threshold on such a
random graph
is\cite{Cohen-Erez-benAvraham-Havlin-2000,%
  Callaway-Newman-Strogatz-Watts-2000}
\begin{equation}
  p_c=\frac{\langle d\rangle}{ \langle d^2 \rangle-\langle
d\rangle},\label{eq:rnd-threshold}
\end{equation} 
where $\langle d^m\rangle\equiv \sum_v d_v^m/n$ is the $m$-th moment
of the vertex degree distribution and the \emph{graph order}, $n$, is
the number of vertices in the graph.  While this result is very
appealing in its simplicity, Eq.~(\ref{eq:rnd-threshold}) has no
predictive power for any actual network where correlations between
degrees or enhanced connectivity (``clustering'') of nearby vertices
may be present.  Substantial effort has been spent on attempts to
account for such
correlations\cite{PastorSatorras-Vazquez-Vespignani-2001,%
  Newman-PRL-2009,%
  Nino-MunozCaro-2013} in random graphs.  However, such approaches can
only account for local correlations and are flawed when applied to
artificial networks like the power grid, which may have a carefully
designed robust backbone (e.g., as in Example \ref{ex:one}).  Such
correlations make Eq.~(\ref{eq:rnd-threshold}) or its versions
accounting for local correlations seemingly irrelevant.

There are only a handful of results on percolation for general
graphs\cite{Benjamini-Schramm-1996,Hofstad-2010}.  These include the
exact lower bound for the site percolation threshold for any graph
with the maximum vertex degree 
$d_\mathrm{max}$\cite{Hammersley-1961},%
\begin{equation}
  \label{eq:max-deg-bound}
  p_c\ge (d_\mathrm{max}-1)^{-1},
\end{equation}
which coincides with that for the bond percolation\cite{[{Theorem 1.2
    in }]Hofstad-2010}.  Both bounds are achieved on $d$-regular tree
$\mathcal{T}_d$.  Unfortunately, for graphs with wide degree
distributions, Eq.~(\ref{eq:max-deg-bound}) may easily underestimate
the percolation threshold.

An estimate of the percolation threshold for dense graphs (with some
conditions) as the inverse \emph{spectral radius} of the graph,
$\rho({\cal G})\equiv \rho(A_{\cal G})$, the largest eigenvalue of its
adjacency matrix, $A_{\cal G}$, has been suggested in
Ref.~\onlinecite{Bollobas-Borgs-Chayes-Riordan-2010}.  Unfortunately,
the conditions are rather restrictive, and the
estimate is clearly not very accurate for sparse degree-regular graphs
where the spectral radius $\rho({\cal G})=d$, as this estimate never reaches
the lower bound in Eq.~(\ref{eq:max-deg-bound}).  

\begin{example}
  \label{ex:one}
  Consider a tree graph ${\cal T}\equiv {\cal T}_{d;r,L}$ constructed
  by attaching $r$ chains of length $L$ to each vertex of a
  $d$-regular tree ${\cal T}_{d}$, see
  Fig.~\ref{fig:backbone_network}.  The percolation threshold
  coincides with that of ${\cal T}_d$, $p_c=p_c(T_d)=(d-1)^{-1}$.  On
  the other hand, Eq.~(\ref{eq:rnd-threshold}) gives $p_c \to 0$ if we
  take $L=1$, $r$ large, and $p_c\to 1$ if we take $r=1$, $L$ large.
  Similarly, the spectral radius is $\rho({\cal
    T}_{d;r,1})=d/2+[(d/2)^2+r]^{1/2}$ (we took $L=1$); the
  corresponding estimated threshold varies in the range $0<[\rho({\cal
    G})]^{-1}\le 1/d$, while the lower bound (\ref{eq:max-deg-bound})
  varies in the range $0<p_c^{\rm min}\le (d-1)^{-1}$.
\end{example}
\begin{figure}[htbp]
\centering
\includegraphics[width=1.96in]{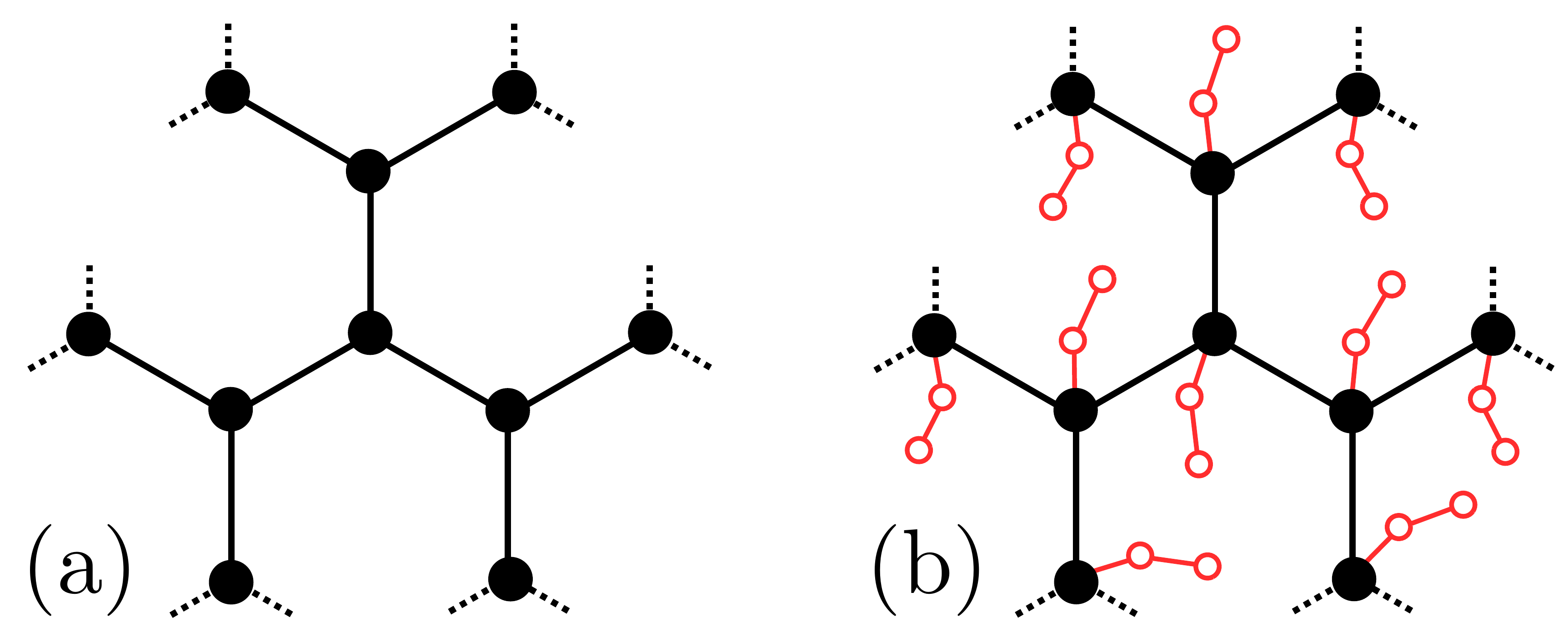}
\caption{(Color online) (a) A $d$-regular tree used for the backbone
  of the graph in Example \ref{ex:one}. (b) The tree
  ${\cal T}_{d;r,L}$ is grown from the backbone by placing $r$ chains
  of fixed length $L$ (shown $d=3$, $r=1$, $L = 2$) at each vertex of
  the backbone.}\label{fig:backbone_network}
\end{figure}
Thus, Eq.~(\ref{eq:rnd-threshold}), the lower bound
(\ref{eq:max-deg-bound}), or the inverse spectral radius
$[\rho({\cal G})]^{-1}$ do not give accurate estimates of the percolation
threshold for this graph family.

In this work we construct an exact expression for the percolation
threshold on any quasi-transitive tree ${\cal T}$, and a related exact
lower bound for the percolation threshold on a quasi-transitive graph
${\cal G}$ which is more specific than Eq.~(\ref{eq:max-deg-bound}).
These are given by the inverse spectral radius of the \emph{oriented
  line graph} (OLG) ${\cal F}$ introduced by Kotani and
Sunada\cite{Kotani-Sunada-2000}.  The corresponding adjacency 
matrix, $A_\mathcal{F}$, is the Hashimoto
matrix\cite{Hashimoto-matrix-1989} used to enumerate non-backtracking
walks on ${\cal G}$.
We also show that
the inverse spectral radius $\rho({\cal G})$ of the original graph
gives a smaller (inexact) lower bound for the percolation threshold,
\begin{equation}
  p_c\ge 1/\rho({{\cal F}})
  > 1/\rho({\cal G}).\label{eq:our-lower-bound}
\end{equation}
We call a graph ${\cal G}=({\cal V},{\cal E})$ with vertex set ${\cal
  V}$ and edge set ${\cal E}$ \emph{transitive} iff for any two
vertices $u$, $v$ in ${\cal V}$ there is an automorphism of ${\cal G}$
mapping $u$ onto $v$.  Graph ${\cal G}$ is \emph{quasi transitive} if
there is a finite set of vertices ${\cal V}_0\subset {\cal V}$ such
that any $v\in {\cal V}$ is taken into ${\cal V}_0$ by some
automorphism of ${\cal G}$.  We say that any vertex which can be
mapped onto a vertex $v_0\in{\cal V}_0$ is in the equivalence class of
$v_0$.  

In site percolation on a graph ${\cal G}$, each vertex is open
with probability $p$ and closed with probability $1-p$; two
neighboring open vertices belong to the same cluster.  Percolation
happens if there is an infinite cluster on ${\cal G}$.

First consider a quasi transitive \emph{tree} ${\cal T}$, a graph with
no cycles.  According to Eq.~(\ref{eq:max-deg-bound}), the
corresponding percolation threshold must be strictly non-zero,
$p_c\equiv p_c({\cal T})>0$.  Percolation threshold on any tree can be
found exactly by constructing a set of recursive equations
starting with some arbitrarily chosen
root\cite{Flory1-1941,*Flory2-1941,*Flory3-1941}.  For a given open
vertex $i$, let us introduce the probability $Q_{ij}$ that $i$ is
connected to a finite cluster through its neighbor $j$.  The
corresponding recursive equations have the form
\begin{equation}
  Q_{ij} = \prod_{l\sim j :  l\neq i}  (1-p + pQ_{jl}),
  \label{eq:MF_eqtn}
\end{equation}
where the product is taken over all neighbors $l$ of $j$ (denoted
$l\sim j$) such that $l\neq i$ so that only so far uncovered
independent branches are included.  The growth of a branch into an
infinite cluster is impeded by a neighboring site being closed
(probability $1-p$), or being open but connecting to a finite branch
(probability $p Q_{jl}$).

Below the percolation threshold, $p< p_c$,
Eqs.~(\ref{eq:MF_eqtn}) are satisfied identically with $Q_{ij}=1$.
Right at the percolation threshold, we expect the probability of an
infinite cluster to be vanishingly small, and the probabilities
$Q_{ij}$ can be expanded
\begin{equation}
Q_{ij} = 1 - \epsilon_{ij},\quad i\sim j,
\end{equation}
where $\epsilon_{ij}$ is infinitesimal.  Expanding
Eqs.~(\ref{eq:MF_eqtn}) to linear order in $\epsilon_{ij}$, we obtain
the following eigenvalue problem 
at the threshold, $p=p_c$,
\begin{equation}
  \label{eq:ev}
  \lambda \epsilon_{ij} = \sum_{l\sim j: l \neq i}  \epsilon_{jl},\quad
  \lambda\equiv 1/p_c.
\end{equation}
The percolation threshold corresponds to the largest eigenvalue
$\lambda$ corresponding to a non-negative eigenvector,
$\epsilon_{ij}\ge 0$.  To ensure the probability $p_c \le 1$, the
eigenvalue needs to be sufficiently large, $\lambda\ge 1$.  It is convenient
to extend Eqs.~(\ref{eq:ev}) to an arbitrary graph ${\cal G}$, where
$\epsilon_{ij}\neq0$ iff the corresponding component of the adjacency
matrix is nonzero, $A_{ij}\neq0$, including any diagonal elements,
$i=j$, corresponding to loops in ${\cal G}$.

The eigenvalue problem (\ref{eq:ev}) has a non-symmetric matrix with
non-negative elements.  According to Perron-Frobenius
theory\cite{Perron-1907,Frobenius-1912,Meyer-book-2000} of
non-negative matrices, there always exists a non-negative solution
with the eigenvalue $\lambda$ equal to the spectral radius $\rho\ge0$ of
this matrix, although in general it is possible to have $\rho=0$.  

To establish a lower bound on $\rho$, we first construct a graphical
interpretation of Eqs.~(\ref{eq:ev}).  The components $\epsilon_{ij}$
correspond to directed edges of the original graph; the entire set
corresponds to sites of the \emph{line digraph}
\cite{Harary-Norman-1960} associated with the symmetric digraph
$\tilde{\mathcal{G}}$ equivalent to the original graph ${\cal G}$.
Namely, each edge $(i,j)\in \mathcal{E}(\mathcal{G})$ is replaced by a
pair of directed edges, $\{(i;j),(j;i)\}\subset
\mathcal{E}(\tilde{\mathcal{G}})$.  The summation over $l$ in the
r.h.s.\ of Eq.~(\ref{eq:ev}) would correspond to the adjacency matrix
of the line digraph of $\tilde{\mathcal{G}}$, were it not for the
exclusion $l\neq i$.  With such a restriction, we obtain the adjacency
matrix of the OLG \cite{Kotani-Sunada-2000} ${\cal F}_{\tilde{\cal
    G}}$ (technically, this is a 
digraph).  

Generally, given a digraph ${\cal D}=({\cal V},{\cal E})$, the
associated OLG ${\cal F}_{\cal D}$ has the vertex set ${\cal V}({\cal
  F}_{\cal D})={\cal E}({\cal D})$ and directed edges $\biglb((i;j);
(j;l)\bigrb)$ such that $\{(i;j),(j;l)\}\subset {\cal E}( {\cal D})$
and $l\neq i$.  To simplify notations, we will call the OLG ${\cal
  F}_{\cal G}$ of a graph ${\cal G}$ the OLG ${\cal F}_{\tilde{\cal
    G}}$ of the corresponding symmetric digraph $\tilde{\cal G}$.  
By construction, Eqs.~(\ref{eq:ev}) are the eigenvalue equations for
the adjacency matrix of the OLG, $A_{{\cal F}_{\cal G}}$.  This matrix
is also known as the Hashimoto matrix of the original
graph\cite{Hashimoto-matrix-1989}.
We will prove the following
\begin{theorem}\label{th:existence}
  The largest real-valued eigenvalue $\lambda$ of Eqs.~(\ref{eq:ev})
  corresponding to a non-trivial eigenvector with non-negative components,
  $\epsilon_{ij}\ge0$, is given by the spectral radius of the OLG,
  $\lambda_\mathrm{max}=\rho({\mathcal{F}_\mathcal{G}})$.  It
  satisfies $\lambda_\mathrm{max}\ge 1$ for any connected quasi
  transitive graph ${\cal G}$ which is not a finite tree.
\end{theorem}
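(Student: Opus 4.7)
The plan is to split the theorem into (i) identification of $\lambda_\mathrm{max}$ with the spectral radius $\rho(\mathcal{F}_\mathcal{G})$, and (ii) the lower bound $\rho(\mathcal{F}_\mathcal{G})\ge 1$ for the graphs of interest. Quasi transitivity provides only finitely many vertex orbits, bounding the maximum degree $d_\mathrm{max}$ and making $A_{\mathcal{F}_\mathcal{G}}$ a bounded non-negative operator on $\ell^\infty$ whose row sums never exceed $d_\mathrm{max}-1$.

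For (i) I would rely on Perron--Frobenius theory. The easy direction, $\lambda_\mathrm{max}\le\rho(\mathcal{F}_\mathcal{G})$, is immediate since $|\lambda|\le\rho$ for every eigenvalue $\lambda$ of a bounded operator. For the matching lower bound, in the finite case I would just quote the classical Perron--Frobenius theorem, which produces a non-negative eigenvector at $\rho$. For infinite $\mathcal{G}$, I would exhaust $\mathcal{G}$ by finite connected subgraphs $\mathcal{G}_1\subset\mathcal{G}_2\subset\cdots$ (balls of growing radius about a fixed vertex); each $A_{\mathcal{F}_{\mathcal{G}_m}}$ sits as a principal submatrix of $A_{\mathcal{F}_\mathcal{G}}$, so principal-submatrix monotonicity gives $\rho_m:=\rho(A_{\mathcal{F}_{\mathcal{G}_m}})\nearrow\rho(\mathcal{F}_\mathcal{G})$, with convergence justified by Gelfand's formula together with monotone convergence of the walk counts $(A^n)_{e,e'}$. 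Normalizing the finite Perron eigenvectors $\epsilon^{(m)}$ to $\|\epsilon^{(m)}\|_\infty=1$ and extracting a pointwise subsequential limit on the countable edge set by a diagonal argument in $[0,1]^{\mathcal{E}(\tilde{\mathcal{G}})}$, I would obtain a non-negative limit $\epsilon$; the uniform row-sum bound permits passing to the limit in the eigenvalue equation and yields an eigenvector of $A_{\mathcal{F}_\mathcal{G}}$ with eigenvalue $\rho(\mathcal{F}_\mathcal{G})$.

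For (ii) I would split on whether $\mathcal{G}$ contains a cycle. If $\mathcal{G}$ has a cycle $\gamma$ of length $k$, the finite subgraph consisting of $\gamma$ alone has OLG equal to a disjoint union of two directed $k$-cycles with spectral radius exactly $1$; since its adjacency matrix sits as a principal submatrix of $A_{\mathcal{F}_\mathcal{G}}$, the monotonicity invoked in (i) gives $\rho(\mathcal{F}_\mathcal{G})\ge 1$. If instead $\mathcal{G}$ is a tree, the hypothesis forces $\mathcal{G}$ to be infinite, and I would exhibit a non-zero $\epsilon\in\ell^\infty$ with $A_{\mathcal{F}_\mathcal{G}}\epsilon\ge\epsilon$ by declaring $\epsilon_{(i;j)}=1$ iff some infinite non-backtracking ray of $\mathcal{G}$ begins with the directed edge $(i;j)$, and $\epsilon_{(i;j)}=0$ otherwise. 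K\"onig's lemma, applied to the infinite connected bounded-degree tree, guarantees at least one such ray, so $\epsilon\neq 0$; and whenever $\epsilon_{(i;j)}=1$ is witnessed by a ray $(i;j),(j;l_1),(l_1;l_2),\ldots$, its tail shows $\epsilon_{(j;l_1)}=1$, so $(A_{\mathcal{F}_\mathcal{G}}\epsilon)_{(i;j)}\ge\epsilon_{(j;l_1)}=\epsilon_{(i;j)}$. Iterating gives $A_{\mathcal{F}_\mathcal{G}}^n\epsilon\ge\epsilon$ and hence $\|A_{\mathcal{F}_\mathcal{G}}^n\|_{\ell^\infty\to\ell^\infty}\ge 1$ for all $n$, so $\rho(\mathcal{F}_\mathcal{G})=\lim_n\|A_{\mathcal{F}_\mathcal{G}}^n\|^{1/n}\ge 1$.

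The main obstacle is the infinite-dimensional content of (i): both the monotone approximation $\rho_m\nearrow\rho(\mathcal{F}_\mathcal{G})$ and the extraction of a non-trivial limit eigenvector rely crucially on the uniform row-sum bound supplied by quasi transitivity. A secondary difficulty is the tree subcase of (ii), which cannot be handled by finite-subgraph exhaustion (every finite subtree has $\rho_m=0$) and genuinely requires the explicit ray-based $\epsilon$, with non-triviality secured by K\"onig's lemma.
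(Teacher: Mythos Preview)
Your argument has a genuine gap in part (i), and it is precisely the tree case that you flag only for (ii).  The claimed convergence $\rho_m\nearrow\rho(\mathcal{F}_\mathcal{G})$ is false in general: for any infinite tree (e.g.\ the $d$-regular tree $\mathcal{T}_d$), every finite ball $\mathcal{G}_m$ is a finite tree, its Hashimoto matrix $A_{\mathcal{F}_{\mathcal{G}_m}}$ is nilpotent, and hence $\rho_m=0$ for all $m$, while $\rho(\mathcal{F}_{\mathcal{T}_d})=d-1>0$.  The monotone convergence of walk counts gives $\|A_m^n\|\nearrow\|A^n\|$ for each fixed $n$, but Gelfand's formula requires $n\to\infty$ first, and the two limits do not commute here.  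Consequently your diagonal-extraction argument produces no non-trivial eigenvector in the tree case, and the ray-indicator $\epsilon$ you build in (ii) does not fill the gap: it satisfies only $A_{\mathcal{F}_\mathcal{G}}\epsilon\ge\epsilon$, and at any branching vertex where several outgoing edges lie on infinite rays the inequality is strict, so it is not an eigenvector.  A secondary issue, even when $\rho_m\to\rho$, is that you give no reason why the pointwise subsequential limit of the normalized $\epsilon^{(m)}$ is non-zero; the Perron mass can escape to infinity.

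The paper avoids all of this by using quasi transitivity in an essential way that your proposal does not: it passes to the finite quotient $\mathcal{F}_\mathcal{G}/\Phi_\Gamma$ under the induced automorphism group, proves (via Lemma~\ref{th:backbone-OLG}) that this quotient digraph is strongly connected once degree-one vertices have been stripped to form the backbone $\mathcal{B}$, applies finite Perron--Frobenius there to obtain a strictly positive eigenvector with eigenvalue $\rho(M)\ge 1$, and then lifts back to $\mathcal{F}_\mathcal{G}$ and reinstates the pruned leaves without changing $\lambda_\mathrm{max}$.  Your use of quasi transitivity only for the degree bound is too weak; the quotient-to-finite step is where the infinite-dimensional difficulty is actually discharged.
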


Let $\Gamma$ be a group of automorphisms of a graph ${\cal G}$.  The
\emph{quotient graph} ${\cal G}/\Gamma$ is the graph whose vertices
are equivalence classes ${\cal V}({\cal G})/\Gamma=\{\Gamma
v:v\in{\cal V}({\cal G})\}$, and an edge $(\Gamma u,\Gamma v)$ appears
in ${\cal G}/\Gamma$ if there are representatives $u_0\in \Gamma u$
and $v_0\in \Gamma v$ that are neighbors in ${\cal G}$,
$(u_0,v_0)\in{\cal E}({\cal G})$.  Same definition applies in the case
of a digraph $\mathcal{D}$, except that we need to consider directed
edges, e.g., $(u_0;v_0)\in{\cal E}({\cal D})$.  Notice that a pair of
equivalent neighboring vertices in the original (di)graph ${\cal G}$
produces a loop in the quotient (di)graph ${\cal G}/\Gamma$.  
Notice also that an automorphism $\gamma$ of the digraph
$\mathcal{D}$ 
induces a unique automorphism $\varphi_\gamma$ of the corresponding OLG
$\mathcal{F}_{\cal D}$, and a group $\Gamma$ of automorphisms of
$\mathcal{D}$ induces an isomorphic group $\Phi_\Gamma$ of
automorphisms of $\mathcal{F}_{\mathcal{D}}$.
We will need the following two Lemmas:
\begin{lemma}
  \label{th:backbone-OLG}
  A finite quotient graph $\mathcal{F}_0\equiv \mathcal{F}_{\cal
    G}/\Phi_\Gamma$ of the OLG $\mathcal{F}_{\cal G}$ of any connected
  quasi-transitive graph $\mathcal{G}$ with automorphism group
  $\Gamma$ is strongly connected if the minimum and the maximum vertex
  degrees of ${\cal G}$ satisfy
  $d_\mathrm{max}(\mathcal{G})>2$ and $d_\mathrm{min}(\mathcal{G})>1$.
\end{lemma}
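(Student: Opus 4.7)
The plan is to identify strong connectivity of $\mathcal{F}_0$ with a statement about non-backtracking walks in $\mathcal{G}$ modulo $\Gamma$, and then verify this using the degree hypotheses. A directed edge $\bigl((i;j),(j;l)\bigr)$ of $\mathcal{F}_{\mathcal{G}}$ encodes a non-backtracking two-step walk $i\to j\to l$ in $\mathcal{G}$, so directed walks in $\mathcal{F}_{\mathcal{G}}$ are exactly non-backtracking walks in $\mathcal{G}$. Passing to the quotient, strong connectivity of $\mathcal{F}_0$ becomes the following assertion: for every pair of directed edges $e_1,e_2\in\mathcal{V}(\mathcal{F}_{\mathcal{G}})$ there exist $\gamma\in\Gamma$ and a non-backtracking walk in $\mathcal{G}$ starting with $e_1$ and ending with $\gamma(e_2)$. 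Finiteness of $\mathcal{F}_0$ is immediate from quasi-transitivity, which gives $\Gamma$ only finitely many orbits both on vertices and on directed edges of $\mathcal{G}$.

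The proof of this assertion proceeds in three stages. First, I would use $d_{\mathrm{min}}(\mathcal{G})\ge 2$ to ensure every vertex $(i;j)\in\mathcal{V}(\mathcal{F}_{\mathcal{G}})$ has out-degree at least one, so that any non-backtracking walk from $e_1$ extends indefinitely. Second, I would use $d_{\mathrm{max}}(\mathcal{G})\ge 3$: fix a vertex $v_*$ of degree $\ge 3$; any component of the subgraph induced on the degree-$\le 2$ vertices is a path or cycle, and since $\mathcal{G}$ is connected with $d_{\mathrm{max}}>2$, no such component can be an isolated cycle of $\mathcal{G}$, so any infinite non-backtracking walk must eventually visit a vertex in the orbit $\Gamma v_*$. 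Third, at each such visit the walk enjoys at least two outgoing non-backtracking choices; I would use this flexibility, combined with the spread of $\Gamma$-orbits throughout $\mathcal{G}$, to steer the walk into any prescribed target orbit $\Gamma e_2$.

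The main obstacle lies in this third, routing stage. The cleanest argument is by a density-type reasoning: the set $\mathcal{T}(e_1)\subset\mathcal{V}(\mathcal{F}_{\mathcal{G}})$ of directed edges reachable from $e_1$ by some non-backtracking walk is infinite --- for a tree it contains every directed edge of the rooted sub-tree at $j_1$ avoiding the branch through $i_1$ (oriented outward), and for a non-tree $\mathcal{G}$ it is strictly larger due to cycles. Quasi-transitivity then forces every $\Gamma$-orbit of directed edges to meet $\mathcal{T}(e_1)$: otherwise some orbit would be confined to a $\Gamma$-invariant thin region, which is impossible given that the $\Gamma$-orbits on vertices are finite in number and cover all of $\mathcal{G}$. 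Finally, after reaching a translate $\gamma(i_2)$ by a non-backtracking walk, one arranges the last step to exit along $\gamma(e_2)$; if the walk arrives along $(\gamma(j_2);\gamma(i_2))$, which would force a backtrack, one adjusts the choice at the most recent degree-$\ge 3$ vertex so as to approach $\gamma(i_2)$ from a different direction (possibly replacing $\gamma$ by another element of $\Gamma$ mapping $e_2$ into $\mathcal{T}(e_1)$), producing a valid non-backtracking walk that terminates at $\gamma(e_2)$.
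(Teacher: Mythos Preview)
Your reformulation and the first two stages are essentially fine. The genuine gap is in stage three. The assertion that $\mathcal{T}(e_1)$, being infinite, must meet every $\Gamma$-orbit of directed edges is not justified: an infinite subset of a set with only finitely many $\Gamma$-orbits can perfectly well miss some orbits entirely, and ``confined to a $\Gamma$-invariant thin region'' is a phrase, not an argument. Your repair for the case when the walk arrives at $\gamma(i_2)$ along $(\gamma(j_2);\gamma(i_2))$ is likewise incomplete: changing the choice at the last degree-$\ge 3$ vertex can send the walk off in an arbitrary direction, and you give no mechanism for bringing it back to a $\Gamma$-translate of $i_2$ from a different side. The core difficulty you have not confronted is orientation reversal: you must reach the correct orbit of \emph{directed} edges, and nothing in your sketch distinguishes an edge from its reverse.

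The paper isolates exactly this reversal as the key step. It shows that from any directed edge $u$ one can reach, by a non-backtracking walk, some element of the $\Phi_\Gamma$-orbit of the reverse $\bar u$. The construction is concrete: grow two distinct non-backtracking walks $p_1\neq p_2$ from $u$ (possible since some vertex has degree $>2$); by pigeonhole on the finite quotient $\mathcal{G}/\Gamma$ each eventually revisits a vertex class already traversed, and a short case split (does either loop close at a class other than that of the tail of $u$?) lets one splice $\Gamma p_1$ with the reverse of $\Gamma p_2$, trimming any backtracks, into a walk from $u$ to an equivalent of $\bar u$. With reversal available, ordinary connectivity of $\mathcal{G}$ supplies an undirected path between the underlying edges $e_u$ and $e_v$, and applying the reversal at each end corrects the orientations.
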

\begin{proof}  
  We are going to prove that for every ordered pair of vertices
  $(u;v)$ in $\mathcal{F}_{\cal G}$ there is a directed path between
  some vertices $u_0\in\Phi_\Gamma v$ and $v_0\in\Phi_\Gamma v$,
  respectively equivalent to $u$, $v$ under the isomorphism group
  $\Phi_\Gamma$ induced by $\Gamma$.  The vertices $\{u, v\} \subset
  \mathcal{F}_{\cal G}$ are directed edges in the digraph $\tilde
  {\cal G}$; denote the corresponding undirected edges
  $\{e_u,e_v\}\subset{\cal E}({\cal G})$.  Connectivity of
  $\mathcal{G}$ implies the existence of a path on ${\cal G}$
  connecting a vertex in $e_u$ and a vertex in $e_v$ which does not
  include these two edges.  Thus, there is a directed path on
  $\tilde{\cal G}$ connecting either $u$ or reverse of $u$ with either
  $v$ or reverse of $v$.  

  To ensure the existence of a directed path
  between the directed edges equivalent to actual $u$ and $v$, we may
  just construct a directed non-backtracking path from any directed
  edge $u\in{\cal E}(\tilde{\cal G})$ to one of the edges $\bar u_0\in
  \Phi_\Gamma \bar u$ equivalent to its reverse, $\bar u$.  Since there are no
  degree-one vertices on ${\cal G}$, with a finite number of vertex
  equivalence classes induced by $\Gamma$, any non-backtracking path
  $p$ starting with $u$ will eventually come to a vertex equivalent to
  that already in the path; the corresponding path $\Gamma p$ on the
  quotient graph $\mathcal{G}/\Gamma$ loops back onto itself.
  Consider two such paths $p_1$ and $p_2\neq p_1$ starting at $u$;
  they exist since there is at least one vertex with degree $d>2$ in
  the graph $\mathcal{G}$.  If either of $\Gamma p_1$ or $\Gamma p_2$
  loops back onto itself on $\mathcal{G}/\Gamma$ at a point other than
  the tail of $u$, we can complete a portion of that path in the
  reverse direction to arrive at some $\bar u_0$ equivalent to the reverse
  of $u$.  Otherwise (both $p_1$ and $p_2$ end at equivalents of the
  tail of $u$), the required path on $\mathcal{G}/\Gamma$ is $\Gamma
  p_1$ joined with the reverse of $\Gamma p_2$, with any backtracking
  segments in the resulting path removed.  In either case, the
  corresponding path on $\mathcal{F}_\mathcal{G}$ connects $u$ with an
  equivalent of its reverse, $\bar u$; its image under $\Phi_\Gamma$ is
  the path connecting $u\in \mathcal{F}_\mathcal{G}/\Phi_\Gamma$ with
  the corresponding reverse, $\bar u\in
  \mathcal{F}_\mathcal{G}/\Phi_\Gamma$.
\end{proof}
\begin{lemma}
  \label{th:backbone-ev}
  A non-trivial solution of Eqs.~(\ref{eq:ev}) with an eigenvalue
  $\lambda\ge 1$ and a positive-component eigenvector 
  satisfying the condition $\epsilon_{ij}=\epsilon_{i'j'}$ for any two
  ordered pairs of adjacent vertices $(i;j)$ and $(i';j')$ that can be
  mapped onto each other by some automorphism of $\mathcal{G}$ exists
  and is unique for any connected quasi-transitive graph $\mathcal{G}$
  with vertex degrees limited by
  $d_\mathrm{max}(\mathcal{G})>2$ and $d_\mathrm{min}(\mathcal{G})>1$.
\end{lemma}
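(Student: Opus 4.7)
The plan is to use the imposed symmetry to reduce the eigenvalue problem (\ref{eq:ev}) to a finite one, and then apply the Perron--Frobenius theorem for irreducible non-negative matrices.

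First I would observe that the condition $\epsilon_{ij}=\epsilon_{i'j'}$ for all $\Gamma$-equivalent ordered pairs is exactly the statement that $\epsilon$ is constant on each orbit of $\Phi_\Gamma$ acting on $\mathcal{V}(\mathcal{F}_{\cal G})$. By quasi-transitivity of $\mathcal{G}$, there are finitely many such orbits $O_1,\dots,O_N$, so the $\Phi_\Gamma$-invariant vectors form an $N$-dimensional subspace parametrized by $(e_1,\dots,e_N)$. Because $\Phi_\Gamma$ acts by automorphisms of $\mathcal{F}_{\cal G}$, it commutes with the Hashimoto matrix $A_{\mathcal{F}_{\cal G}}$, so the restriction of this matrix to the invariant subspace is a well-defined non-negative $N\times N$ matrix $M$ whose entry $M_{\alpha\beta}$ equals the number of out-neighbors in $O_\beta$ of any fixed representative of $O_\alpha$ (independent of the representative, since $\Phi_\Gamma$ acts transitively on each orbit).

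Next I would invoke Lemma~\ref{th:backbone-OLG}: under $d_\mathrm{max}>2$ and $d_\mathrm{min}>1$ the quotient digraph $\mathcal{F}_0=\mathcal{F}_{\cal G}/\Phi_\Gamma$ is strongly connected. Since the zero/non-zero pattern of $M$ coincides with the adjacency structure of $\mathcal{F}_0$, the matrix $M$ is irreducible. The Perron--Frobenius theorem for irreducible non-negative matrices then gives a simple eigenvalue $\rho(M)>0$ with a strictly positive eigenvector that is unique up to positive scaling; lifting this eigenvector to a $\Phi_\Gamma$-invariant function on $\mathcal{V}(\mathcal{F}_{\cal G})$ yields the required solution of (\ref{eq:ev}), establishing both existence and (up to normalization) uniqueness.

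It remains to show $\lambda=\rho(M)\ge 1$. The $\alpha$-th row sum of $M$ equals $d_j-1$, where $j$ is the head of any directed edge representing $O_\alpha$; since $d_\mathrm{min}>1$, every row sum is at least $1$. The Collatz--Wielandt characterization applied to the all-ones vector gives $\rho(M)\ge\min_\alpha\sum_\beta M_{\alpha\beta}\ge 1$, as needed. The main obstacle I anticipate is not conceptual but bookkeeping: one must check carefully that restriction to the $\Phi_\Gamma$-invariant subspace really produces a matrix whose irreducibility pattern matches the strong-connectivity pattern guaranteed by Lemma~\ref{th:backbone-OLG}. Once this identification is made explicit, existence, uniqueness, and the bound $\lambda\ge 1$ all follow from standard Perron--Frobenius machinery.
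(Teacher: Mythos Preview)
Your proposal is correct and follows essentially the same route as the paper: reduce via the $\Phi_\Gamma$-invariance ansatz to a finite matrix $M$ whose support matches the adjacency of $\mathcal{F}_0=\mathcal{F}_{\cal G}/\Phi_\Gamma$, invoke Lemma~\ref{th:backbone-OLG} for irreducibility, and conclude by Perron--Frobenius. Your write-up is in fact more complete than the paper's, which leaves the bound $\lambda\ge 1$ implicit; your row-sum argument (each row sum of $M$ equals $d_j-1\ge 1$, hence $\rho(M)\ge\min_\alpha\sum_\beta M_{\alpha\beta}\ge 1$) is exactly the right way to make it explicit.
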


\begin{proof} The ansatz leaves a finite eigensystem with a matrix $M$
  whose non-zero elements correspond to the adjacency matrix of the
  quotient graph of the OLG, $\mathcal{F}_0\equiv
  \mathcal{F}_\mathcal{G}/\Phi_\Gamma$.  The statement of the Lemma
  follows from Lemma \ref{th:backbone-OLG} and
  Perron-Frobenius
  theorem\cite{Perron-1907,Frobenius-1912,Meyer-book-2000}.  
\end{proof}
Note that the eigenvalue in Lemma \ref{th:backbone-ev} is
given by the spectral radius of $M$ and is bounded from above and
below by the spectral radii of $\mathcal{F}_\mathcal{G}$ and
$\mathcal{F}_0$, respectively:
\begin{equation}
  1\le   \rho({  \mathcal{F}_\mathcal{G}/\Phi_\Gamma})
  \le \lambda=\rho(M)
  \le \lambda_\mathrm{max}=\rho({\mathcal{F}_\mathcal{G}}). 
  \label{eq:spectral-ineq-long}
\end{equation}
\begin{proof}[Proof of Theorem \ref{th:existence}]
  Define a {\em backbone\/} $\mathcal{B}$ of a graph $\mathcal{G}$, a
  result of the recursive removal of all degree-one vertices.
  For any finite tree the backbone is empty.  For a connected graph
  $\mathcal{G}$ which is not a finite tree, the backbone $\mathcal{B}$
  satisfies $d_\mathrm{min}(\mathcal{B})>1$.  If $\mathcal{G}$ is a
  connected quasi-transitive graph, so is $\mathcal{B}$.  If, in
  addition, $d_\mathrm{max}(\mathcal{B})>2$, then the backbone
  $\mathcal{B}$ satisfies the conditions of Lemma \ref{th:backbone-ev}
  which gives an explicit solution in this case.  Otherwise,
  $\mathcal{B}$ is a connected degree-regular graph with
  $d_\mathrm{max}=d_\mathrm{min}=2$; it is a simple cycle or an
  infinite chain.  In this case the adjacency matrix
  $A(\mathcal{F}_\mathcal{B}/\Phi_\Gamma)$ has two independent
  strongly-connected components corresponding to the two classes of
  non-backtracking paths on $\mathcal{B}$; the corresponding
  eigenvalue $\rho(\mathcal{F}_\mathcal{B}/\Phi_\Gamma)=1$ is
  doubly-degenerate.  In either case, the only admissible eigenvalue
  is given by the spectral radius, $\lambda_\mathrm{max}\ge\lambda=
  \rho(M)\ge\rho({\mathcal{F}_\mathcal{B}/\Phi_\Gamma})\ge1$, see
  Eq.~(\ref{eq:spectral-ineq-long}).
  
  The original graph $\mathcal{G}$ can be restored from the backbone
  $\mathcal{B}$ by restoring degree-one vertices in the opposite order
  starting from the last removed.  For such a vertex $v$ which is
  connected to the vertex $u$ already in the graph, we notice that
  $\epsilon_{uv}=0$ (this bond cannot lead to an infinite cluster), while
  $\epsilon_{vu}$ is determined by the values $\epsilon_{uj}$ for bonds
  already in the graph, see Eq.~(\ref{eq:ev}), where the same eigenvalue
  $\lambda$ must be used.  Thus, additional vertices in
  $\mathcal{F}_\mathcal{G}$ cannot modify the components
  $\epsilon_{ij}\ge 0$ with $\{i,j\}\subset\mathcal{V}(\mathcal{B})$,
  and the maximum eigenvalue remains the same,
  $\lambda_\mathrm{max}=\rho({\mathcal{F}_\mathcal{G}})
  =\rho({\mathcal{F}_\mathcal{B}})\ge
  \rho({\mathcal{F}_\mathcal{B}/\Phi_\Gamma})\ge1$.
\end{proof}
We next apply the constructed mean field theory to calculating
percolation thresholds of more general graphs which may contain cycles.
The main result of this work will be the following 
\begin{theorem}
  The percolation threshold for any simple quasi-transitive graph
  $\mathcal{G}$ which is not a finite tree is bounded from below by
  the inverse spectral radius of the corresponding OLG, $p_c(\mathcal{G})\ge
  1/\rho(\mathcal{F}_\mathcal{G})$. 
  \label{th:threshold-bound}
\end{theorem}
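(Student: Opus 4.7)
The plan is to apply the first-moment method to open non-backtracking walks on $\mathcal{G}$. Because $A_{\mathcal{F}_\mathcal{G}}$ enumerates such walks, its spectral radius $\rho(\mathcal{F}_\mathcal{G})$ controls their exponential growth rate; combining this with the elementary fact that an infinite open cluster must contain arbitrarily long self-avoiding (hence non-backtracking) open paths will yield the stated lower bound on $p_c$.

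Concretely, I would first invoke Theorem \ref{th:existence} together with Lemma \ref{th:backbone-ev} to fix a positive eigenvector $\epsilon$ of $A_{\mathcal{F}_\mathcal{G}}$ with eigenvalue $\rho\equiv\rho(\mathcal{F}_\mathcal{G})$, chosen so that $\epsilon_{ij}$ depends only on the $\Gamma$-equivalence class of the ordered pair $(i;j)$. Quasi-transitivity then forces $\epsilon$ to take only finitely many distinct positive values, so $0<\epsilon_{\min}\le\epsilon_e\le\epsilon_{\max}<\infty$. Iterating $A_{\mathcal{F}_\mathcal{G}}\epsilon=\rho\epsilon$ and sandwiching the all-ones vector between multiples of $\epsilon$ yields a uniform bound $N_n(v)\le C_v\rho^n$ on the number of length-$n$ non-backtracking walks starting at a vertex $v$, with $C_v$ depending only on the equivalence class of $v$. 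In site percolation with parameter $p$, each such walk visits $n+1$ sites, so the expected number of \emph{open} non-backtracking walks of length $n$ emanating from $v$ is at most $C_v\, p\,(p\rho)^n$.

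To convert this into a percolation statement, I would note that if the open cluster $C(v)$ containing $v$ is infinite then, by K\"onig's lemma, $C(v)$ contains an infinite self-avoiding path from $v$, and in particular an open non-backtracking walk of every length. Markov's inequality therefore gives
\begin{equation}
\Pr\!\bigl[\,|C(v)|=\infty\,\bigr]\le C_v\, p\,\bigl(p\,\rho(\mathcal{F}_\mathcal{G})\bigr)^{n},
\end{equation}
valid for every $n$. For any $p<1/\rho(\mathcal{F}_\mathcal{G})$ the right-hand side tends to zero as $n\to\infty$, so $v$ almost surely lies in a finite cluster; since this holds for every $v$ (and there are only finitely many equivalence classes of vertices), no infinite cluster exists, giving $p_c(\mathcal{G})\ge 1/\rho(\mathcal{F}_\mathcal{G})$.

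The main technical obstacle is the entrywise growth bound $N_n(v)\le C_v\rho^n$: $\rho$ is a priori only the operator norm of $A_{\mathcal{F}_\mathcal{G}}$, which does not directly control row sums of its iterates on an infinite graph. This is precisely where the hypothesis that $\mathcal{G}$ is quasi-transitive and not a finite tree is used. Theorem \ref{th:existence} supplies a strictly positive Perron-Frobenius eigenvector, and the ansatz of Lemma \ref{th:backbone-ev} makes its components constant on finitely many equivalence classes, hence bounded above and away from zero; the sandwich argument above then upgrades the spectral identity $A_{\mathcal{F}_\mathcal{G}}^n\epsilon=\rho^n\epsilon$ into the required uniform bound. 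The remaining degenerate case, in which the backbone $\mathcal{B}$ is a chain or cycle, is handled separately: there $\rho(\mathcal{F}_\mathcal{G})=1$, and the inequality $p_c\ge 1$ is either trivial (finite cycle) or matches the known one-dimensional result (infinite chain).
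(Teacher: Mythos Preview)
Your first-moment argument via non-backtracking walk counting is sound in outline and constitutes a genuinely different route from the paper's. The paper instead passes to the universal cover: it defines a single-cycle-unwrapping (SCU) operation that replaces a non-bridge bond $b$ by an infinite $\mathbb{Z}$-chain of copies of $\mathcal{G}\setminus b$, shows (Lemma~\ref{th:sequence}) that iterated SCUs convert $\mathcal{G}$ into a tree $\mathcal{T}$, and proves (Lemma~\ref{th:spectral-radius}) that each SCU preserves the OLG spectral radius, so $\rho(\mathcal{F}_\mathcal{T})=\rho(\mathcal{F}_\mathcal{G})$. Since each SCU is undone by a $\mathbb{Z}$-quotient, the Benjamini--Schramm inequality $p_c(\mathcal{H}/\Gamma)\ge p_c(\mathcal{H})$ yields $p_c(\mathcal{G})\ge p_c(\mathcal{T})=1/\rho(\mathcal{F}_\mathcal{T})$. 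Your approach is more self-contained---it avoids both the covering construction and the external quotient inequality---and is closer in spirit to classical path-counting bounds; the paper's approach, on the other hand, makes transparent why the bound is tight (it is attained on the universal covering tree of $\mathcal{G}$).

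One point needs tightening. Theorem~\ref{th:existence} does \emph{not} in general supply a strictly positive eigenvector on $\mathcal{F}_\mathcal{G}$: its proof shows that the Perron vector vanishes on every directed edge $(u;v)$ pointing away from the backbone $\mathcal{B}$ into a dangling subtree, so $\epsilon_{\min}=0$ and your sandwich bound $N_n(v)\le C_v\rho^n$ does not follow as written. The repair is routine. Lemma~\ref{th:backbone-ev} gives a strictly positive, $\Gamma$-equivariant eigenvector on $\mathcal{F}_\mathcal{B}$ (when $d_{\max}(\mathcal{B})>2$), and the paper records $\rho(\mathcal{F}_\mathcal{B})=\rho(\mathcal{F}_\mathcal{G})$; your sandwich argument then bounds non-backtracking walks on $\mathcal{B}$ by $C\rho^n$. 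Any non-backtracking walk on $\mathcal{G}$ that leaves $\mathcal{B}$ enters a finite dangling tree, can never return (it would have to backtrack through the attachment vertex), and terminates within a uniformly bounded number of steps, since quasi-transitivity forces only finitely many isomorphism types of such trees. Summing the two contributions restores $N_n(v)\le C_v\rho^n$, after which your Markov-inequality step goes through unchanged.
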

This bound is tight, as it becomes an equality for quasi-transitive
trees.  The approach is to construct a tree graph $\mathcal{T}$ which
is locally indistinguishable from the original graph $\mathcal{G}$,
except that a closed walk on $\mathcal{G}$ goes over to a walk connecting
equivalent points on $\mathcal{T}$.  We start by defining an operation
for single cycle unwrapping (SCU) at a given bond which is not a
bridge:
 
\begin{definition}
  Given a connected graph $\mathcal{G}$ and a bond $b\equiv
  (u,v)\in\mathcal{E}(\mathcal{G})$, such that the \emph{two-terminal}
  graph $\mathcal{G}'\equiv
  \biglb(\mathcal{V}(\mathcal{G}),\mathcal{E}(\mathcal{G})\setminus
  b\bigrb)$ with source at $v$ and sink at $u$ is connected, define
  the cycle-unwrapped graph $\mathcal{C}_b\mathcal{G}$ as the series
  composition of an infinite chain of copies $\mathcal{G}'_i$,
  $i\in\mathbb{Z}$, of the graph $\mathcal{G}'$, with the source of
  $\mathcal{G}'_i$ connected to the sink of the $\mathcal{G}'_{i+1}$.
\end{definition}  
The SCU is illustrated in Fig.~\ref{fig:loop_unwrap_1}.  Notice that
for a graph with more than one cycle, unwrapping at $b$ removes one
cycle but creates an infinite number of copies of the remaining
cycles.  Nevertheless, for a locally finite graph, a countable number
of SCUs is needed to remove all cycles.  Indeed, the cycle-unwrapped
image of any path on $\mathcal{G}$ that does not include $b$ will
remain entirely within a single copy of $\mathcal{G}'$.  Thus, if at
each SCU step we choose a bond $b$ at distance $r_b$ from some fixed
origin vertex, such that only bridge bonds can be found closer to the
origin, $r<r_b$, any copy of the remaining non-bridge bond introduced
by the SCU is going to be at a distance $r>r_b$.  Thus, each SCU reduces
the number of non-bridge bonds at $r_b$, and for a locally finite
graph, a finite number of SCUs is required to ensure that all bonds at
$r=0, 1, 2, \ldots$ are bridge bonds.  This proves
\begin{lemma}
  \label{th:sequence}
  For a locally finite graph $\mathcal{G}_0\equiv \mathcal{G}$, a
  sequence of SCUs
  $\mathcal{G}_{m+1}\equiv\mathcal{C}_{{b}_{m+1}}\mathcal{G}_m$ can
  be chosen so that in the $m\to\infty$ limit the resulting graph is a
  tree, $\mathcal{T}\equiv \mathcal{C}_\infty\mathcal{G}$.
\end{lemma}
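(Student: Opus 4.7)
The plan is to formalize the greedy argument sketched in the paragraph just before the lemma. Fix a root vertex $o\in\mathcal{V}(\mathcal{G})$, and for each bond $b=(u,v)$ of a graph $\mathcal{G}_m$ in the sequence let $r_b:=\min\{d_{\mathcal{G}_m}(o,u),d_{\mathcal{G}_m}(o,v)\}$ denote its distance from $o$. At step $m+1$, let $r_m^\star$ be the smallest such distance realized by any non-bridge bond of $\mathcal{G}_m$, and take $b_{m+1}$ to be any non-bridge bond attaining it. (If $\mathcal{G}_m$ has no non-bridge bond, it is already a tree.)

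Two ingredients drive the argument. First, the ball $B_r(o)$ in every $\mathcal{G}_m$ is finite: local finiteness is preserved under SCU, since each copy $\mathcal{G}'_i$ inherits local finiteness from $\mathcal{G}'$, and only finitely many copies can intersect $B_r(o)$ at all (those near the origin copy in the linear series chain). Second, the geometric claim stated in the paragraph above the lemma: when all bonds of $\mathcal{G}_m$ at distance strictly less than $r_b=r_m^\star$ are bridges, an SCU at $b_{m+1}$ produces no new non-bridge bond at distance $\le r_m^\star$. The justification is that the SCU image of any closed walk in $\mathcal{G}_m$ not using $b_{m+1}$ lies entirely within a single copy $\mathcal{G}'_0$, so copies $\mathcal{G}'_i$ with $i\ne0$ contribute only via bonds reachable from $o$ through the chain of gluing vertices $v_i=u_{i+1}$; each such traversal forces the corresponding copy of a non-bridge bond to sit strictly further from $o$ than $r_b$. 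Combined with the outright removal of $b_{m+1}$, and with the observation that bonds of $\mathcal{G}'_0$ can only change status from non-bridge to bridge (never the reverse) under SCU, the total non-bridge count at distance $\le r_m^\star$ strictly decreases.

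Induction on $r$ then completes the construction. At each fixed $r$, finiteness of $B_r(o)$ and strict decrease of the non-bridge count imply that after finitely many SCUs every bond at distance $\le r$ from $o$ is a bridge. Because all subsequent SCUs are carried out on non-bridge bonds at distance $>r$ and such operations cannot affect distances within $B_r(o)$, the restriction of $\mathcal{G}_m$ to $B_r(o)$ stabilizes. The limit graph $\mathcal{T}\equiv\mathcal{C}_\infty\mathcal{G}$ is then defined as the direct limit whose ball of radius $r$ around $o$ is this common eventual value; by construction every bond of $\mathcal{T}$ is a bridge, and $\mathcal{T}$ has no cycles, i.e., is a tree.

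The main obstacle is pinning down the geometric claim: one must verify carefully that the removal of $b_{m+1}$ together with the series-compositional gluing cannot route a new non-bridge bond into $B_{r_m^\star}(o)$, and that bonds of the origin copy $\mathcal{G}'_0$ cannot acquire non-bridge status from SCU. Once this is locked down, the local-finiteness bookkeeping and the direct-limit construction of $\mathcal{T}$ are routine.
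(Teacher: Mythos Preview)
Your proposal is correct and follows essentially the same approach as the paper: the paper's proof \emph{is} the paragraph preceding the lemma that you set out to formalize, and your greedy choice of $b_{m+1}$ at the minimal non-bridge distance, together with the observation that paths avoiding $b$ stay in a single copy of $\mathcal{G}'$, matches it exactly. Your additions---the explicit check that local finiteness survives SCU, that bridges in $\mathcal{G}'_0$ cannot become non-bridges, and the direct-limit description of $\mathcal{T}$---fill in details the paper leaves implicit but do not change the strategy.
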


\begin{figure}[htbp]
\centering
\includegraphics[width=\columnwidth]{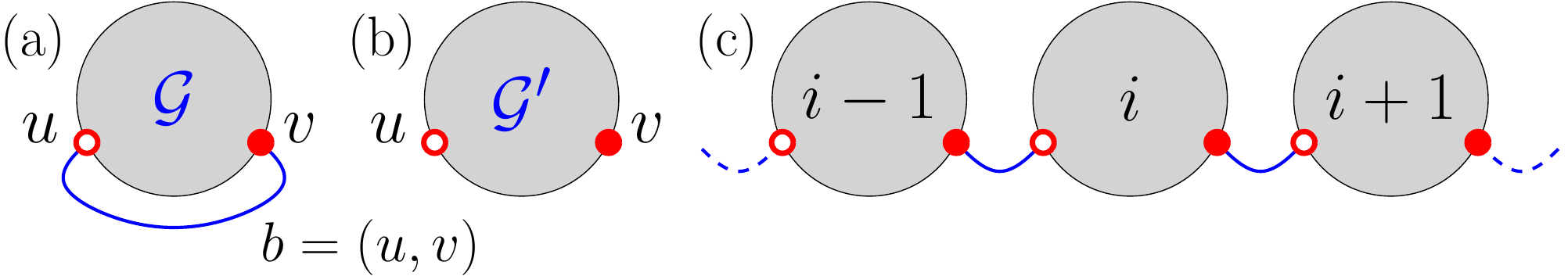}
\caption{(Color online) Illustration of SCU: (a) A graph $\mathcal{G}$ with a
  non-bridge bond $b\equiv (u,v)$ highlighted; (b) Two-terminal graph
  $\mathcal{G}'$; (c)~The resulting graph $\mathcal{C}_b\mathcal{G}$ 
  is a series composition of an infinite chain of copies of
  $\mathcal{G}'$.}
\label{fig:loop_unwrap_1}
\end{figure}

Clearly, the graph $\mathcal{C}_b\mathcal{G}$ produced by an SCU
has a group of isomorphisms $\mathbb{Z}$ generated by the translation
$i\to i+1$.  The corresponding graph quotient recovers the original
graph, $\mathcal{G}=(\mathcal{C}_b\mathcal{G})/\mathbb{Z}$.  This
symmetry allows us to prove the following 
\begin{lemma}
  For any simple quasi-transitive graph $\mathcal{G}$, SCU does not
  change the spectral radius of $\mathcal{G}$,
  $\rho(\mathcal{G})=\rho(\mathcal{C}_b\mathcal{G})$, or of the
  OLG,
  $\rho(\mathcal{F}_\mathcal{G})
  =\rho(\mathcal{F}_{\mathcal{C}_b\mathcal{G}})$. 
  \label{th:spectral-radius}
\end{lemma}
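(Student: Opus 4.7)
The plan is to exploit the translational $\mathbb{Z}$-symmetry of $\mathcal{C}_b\mathcal{G}$ via a Bloch (direct-integral) decomposition of the adjacency operator. First I would make the covering structure explicit: by construction $V(\mathcal{C}_b\mathcal{G})=V(\mathcal{G})\times\mathbb{Z}$, the translation $T:(x,i)\mapsto(x,i+1)$ is a free graph automorphism, and the quotient $\mathcal{C}_b\mathcal{G}/\mathbb{Z}$ recovers $\mathcal{G}$ with the removed bond $b$ restored. Thus $\mathcal{C}_b\mathcal{G}$ is a normal $\mathbb{Z}$-cover of $\mathcal{G}$.

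Fourier transforming in the $\mathbb{Z}$-variable then yields the direct-integral decomposition
\[
\ell^2(\mathcal{C}_b\mathcal{G})\cong\int_{[0,2\pi)}^{\oplus}\ell^2(\mathcal{G})\,\frac{d\theta}{2\pi},\qquad A_{\mathcal{C}_b\mathcal{G}}=\int_{[0,2\pi)}^{\oplus}A(\theta)\,\frac{d\theta}{2\pi},
\]
where $A(\theta)$ equals $A_{\mathcal{G}}$ except that the matrix element for the restored bond $b=(u,v)$ is multiplied by a phase $e^{\pm i\theta}$. Two observations then close the argument. (i) At $\theta=0$ one has $A(0)=A_\mathcal{G}$, so since the spectrum of a direct integral is the union of fiber spectra, $\rho(\mathcal{C}_b\mathcal{G})\ge\rho(A(0))=\rho(\mathcal{G})$. (ii) The entries of $A(\theta)$ agree with those of $A_\mathcal{G}$ in modulus, so $|(A(\theta)^n)_{ij}|\le(A_\mathcal{G}^n)_{ij}$ for all $n$ by iteration of the triangle inequality; this yields $\rho(A(\theta))\le\rho(A_\mathcal{G})$ uniformly in $\theta$ via the Gelfand formula $\rho(B)=\lim\|B^n\|^{1/n}$. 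Combining (i) and (ii) gives $\rho(\mathcal{C}_b\mathcal{G})=\rho(\mathcal{G})$.

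The OLG assertion reduces to the same picture. The paper has already noted that any graph automorphism lifts to the OLG, so the $\mathbb{Z}$-action on $\mathcal{C}_b\mathcal{G}$ extends to $\mathcal{F}_{\mathcal{C}_b\mathcal{G}}$. The quotient satisfies $\mathcal{F}_{\mathcal{C}_b\mathcal{G}}/\mathbb{Z}=\mathcal{F}_\mathcal{G}$ because local injectivity of the covering map $\pi$ makes the non-backtracking constraint $l\ne u$ upstairs equivalent to $\pi(l)\ne\pi(u)$ downstairs, so OLG-edges project and lift compatibly. I would then repeat the Bloch decomposition for the (non-symmetric) Hashimoto matrix $A_{\mathcal{F}_{\mathcal{C}_b\mathcal{G}}}$: the $\theta=0$ fiber reproduces $A_{\mathcal{F}_\mathcal{G}}$, the phases appear only on the single OLG-vertex lying over the directed bond $b$, and modulus domination transfers without change, giving $\rho(\mathcal{F}_{\mathcal{C}_b\mathcal{G}})=\rho(\mathcal{F}_\mathcal{G})$.

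The main obstacle I anticipate is justifying the modulus-domination step $\rho(A(\theta))\le\rho(A_\mathcal{G})$ rigorously when $\mathcal{G}$ is infinite, since $A_\mathcal{G}$ is then a bounded operator on $\ell^2$ rather than a finite matrix, and the direct-integral formalism must be handled carefully. A secondary subtlety specific to the OLG is non-symmetry of the Hashimoto matrix: instead of appealing to self-adjoint spectral theory, the spectral radius has to be read off from $\lim\|A^n\|^{1/n}$, which remains valid for any bounded non-negative operator but is less familiar than its self-adjoint counterpart. An alternative, more elementary route that avoids direct integrals altogether is to work with walk counts $W_n$ on cover and base and use the path-lifting bijection together with a Cauchy-Schwarz estimate $W_n(\tilde v,T^k\tilde v)^2\le W_{2n}(\tilde v,\tilde v)$ to match exponential growth rates; I would keep this in reserve should the operator-theoretic step prove delicate.
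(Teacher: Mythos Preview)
Your proposal is correct, and it shares with the paper the core idea of exploiting the $\mathbb{Z}$-translation symmetry of $\mathcal{C}_b\mathcal{G}$ so that the $k=0$ (your $\theta=0$) sector reproduces $A_\mathcal{G}$; the paper likewise reduces the OLG case by the identification $\mathcal{F}_{\mathcal{C}_b\mathcal{G}}=\mathcal{C}_b\mathcal{F}_\mathcal{G}$, which is your quotient statement $\mathcal{F}_{\mathcal{C}_b\mathcal{G}}/\mathbb{Z}=\mathcal{F}_\mathcal{G}$ read the other way. Where you genuinely diverge is in the mechanism showing that the spectral radius is attained at $k=0$. The paper works with eigenvectors: it takes the Perron--Frobenius eigenvector of $A(\mathcal{C}_b\mathcal{G})$, symmetrizes it over the $\mathbb{Z}$-action to force $k=0$, and then observes that for a \emph{simple} graph the resulting reduced matrix $M'$ has $\{0,1\}$ entries and hence equals $A_\mathcal{G}$. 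You instead work with norms: the direct-integral fibers $A(\theta)$ are entrywise dominated in modulus by $A_\mathcal{G}$, so the Gelfand formula gives $\rho(A(\theta))\le\rho(A_\mathcal{G})$ uniformly. Your route is cleaner in the infinite setting, since it sidesteps the question of what ``symmetrizing the PF eigenvector over $\mathbb{Z}$'' means when the eigenvector need not be summable; the paper's route is shorter and makes the role of simplicity (no multiple edges, hence $M'=A_\mathcal{G}$) more transparent. Both are valid, and your anticipated obstacle (modulus domination for bounded operators) is harmless: $|B_{ij}|\le C_{ij}$ with $C\ge0$ gives $\|Bx\|\le\|C\,|x|\,\|$ pointwise, hence $\|B^n\|\le\|C^n\|$ and the Gelfand bound follows.
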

\begin{proof}
  The symmetry of $\mathcal{C}_b\mathcal{G}$ implies that an
  eigenvector $e$ can always be chosen to diagonalize both its
  adjacency matrix $A\equiv A(\mathcal{C}_b\mathcal{G})$, $A e=\lambda
  e$, and the translation generator $T$, $T e=\mu e$.  Translation
  group being Abelian, its representations are all one-dimensional,
  with $\mu=e^{ik}$, with $0\le k<2\pi$.  Let $e_0$ with non-negative
  components be the
  Perron-Frobenius eigenvector of the 
  non-negative matrix $A$ with the eigenvalue equal to its spectral
  radius, $\lambda_\mathrm{max}=\rho(A)$.  Symmetrizing $e_0$ over
  $\mathbb{Z}$, gives a non-negative eigenvector $e$
  corresponding to the same $\lambda_\mathrm{max}$ and $k=0$.  This
  corresponds to the ansatz introduced in Lemma \ref{th:backbone-ev},
  thus $\rho(M')=\rho({\mathcal{C}_b\mathcal{G}})$, where $M'$ is the
  reduced matrix corresponding to the symmetric eigenvector of $A$,
  cf.\ Eq.~(\ref{eq:spectral-ineq-long}).  Further, for a simple (di)graph
  $\mathcal{G}$, the matrix elements of $M'$ satisfy
  $M'_{ij}\in\{0,1\}$, thus $M'=A_\mathcal{G}$, which gives
  $\rho(\mathcal{G})=\rho(M')=\rho(\mathcal{C}_b\mathcal{G})$.  The
  proof in the case of $\mathcal{F}_{\mathcal{C}_b\mathcal{G}}$ is
  identical if we notice
  $\mathcal{C}_b\mathcal{F}_\mathcal{G}=\mathcal{F}_{\mathcal{C}_b\mathcal{G}}$.
\end{proof}
\begin{proof}[Proof of Theorem \ref{th:threshold-bound}] Vertex
  transitivity of $\mathcal{G}$ implies that a finite maximum degree
  exisits; according to Lemma~\ref{th:sequence} $\mathcal{G}$ can be
  transformed to a tree $\mathcal{T}$ by a series of SCUs.  Each step
  of the sequence can be undone by a graph quotient,
  $\mathcal{G}_{m}=\mathcal{G}_{m+1}/\mathbb{Z}$.  According to
  Theorem 1 in Ref.~\onlinecite{Benjamini-Schramm-1996}, the percolation
  threshold of a graph quotient cannot be below that of the original
  graph, thus
  $p_c(\mathcal{G}_{m})=p_c(\mathcal{G}_{m+1}/\mathbb{Z})\ge
  p_c(\mathcal{G}_{m+1})$; the entire sequence gives
  $p_c(\mathcal{G})\ge p_c(\mathcal{T})$.  On the other hand,
  Eq.~(\ref{eq:ev}) and Theorem \ref{th:existence} give
  $p_c(\mathcal{T})=1/\rho(\mathcal{F}_\mathcal{T})$.  Moreover, each
  of the intermediate graphs of the sequence is vertex transitive and
  simple, thus the spectral radius of corresponding OLG is preserved
  at each step,
  $\rho(\mathcal{F}_\mathcal{T})=\rho(\mathcal{F}_\mathcal{G})$, see
  Lemma \ref{th:spectral-radius}.
\end{proof}

Finally, we establish the relation between the spectral radius of OLG
with that of the original graph: 
\begin{theorem}
  The spectral radius of any connected non-empty graph $\mathcal{G}$ is
  strictly larger than that of the corresponding OLG,
  $\rho(\mathcal{G})> \rho(\mathcal{F}_\mathcal{G})$.
\end{theorem}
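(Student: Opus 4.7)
The plan is to exploit a factorization of $A_\mathcal{G}$ and $A_{\mathcal{F}_\mathcal{G}}$ through the vertex--edge incidence matrices of the symmetric digraph $\tilde{\mathcal{G}}$. For each directed edge $e\in\mathcal{E}(\tilde{\mathcal{G}})$, let $s(e)$ and $t(e)$ denote its tail and head, and introduce $|\mathcal{V}|\times|\mathcal{E}(\tilde{\mathcal{G}})|$ matrices $S$ and $T$ with entries $S_{v,e}=\delta_{v,s(e)}$ and $T_{v,e}=\delta_{v,t(e)}$. A direct computation gives
\begin{equation*}
  A_\mathcal{G}=S\,T^{T},\qquad T^{T}S=A_{\mathcal{F}_\mathcal{G}}+J,
\end{equation*}
where $J$ is the edge-reversal permutation defined by $J_{e,\bar{e}}=1$; the extra $J$ captures precisely the backtracking continuations $f=\bar{e}$ that are excluded from the OLG construction.

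Next I would invoke the elementary fact that $XY$ and $YX$ share the same nonzero spectrum, which yields $\rho(A_\mathcal{G})=\rho(S\,T^{T})=\rho(T^{T}S)=\rho(A_{\mathcal{F}_\mathcal{G}}+J)$. Note that $A_{\mathcal{F}_\mathcal{G}}+J=T^{T}S$ is exactly the adjacency matrix of the line digraph $L(\tilde{\mathcal{G}})$. Whenever $\mathcal{G}$ is connected with at least one edge, $\tilde{\mathcal{G}}$ is strongly connected, and a short argument---lifting any directed path from $t(e)$ to $s(f)$ in $\tilde{\mathcal{G}}$ to a path from $e$ to $f$ in the line digraph---shows that $L(\tilde{\mathcal{G}})$ is itself strongly connected. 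Consequently $A_{\mathcal{F}_\mathcal{G}}+J$ is an irreducible nonnegative matrix, and $J\neq 0$.

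The strict inequality then follows from the standard Perron--Frobenius comparison: if $M$ is irreducible nonnegative and $0\le M'\le M$ with $M'\neq M$, then $\rho(M')<\rho(M)$. Applied with $M=A_{\mathcal{F}_\mathcal{G}}+J$ and $M'=A_{\mathcal{F}_\mathcal{G}}$, this produces $\rho(A_{\mathcal{F}_\mathcal{G}})<\rho(A_{\mathcal{F}_\mathcal{G}}+J)=\rho(A_\mathcal{G})$, which is the desired bound.

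The step I anticipate to be most delicate is the extension to infinite locally finite graphs, where $A_\mathcal{G}$ and $A_{\mathcal{F}_\mathcal{G}}$ act on infinite-dimensional spaces and the quoted finite-dimensional Perron--Frobenius comparison no longer applies verbatim. The factorization identities and the identity $\rho(S\,T^{T})=\rho(T^{T}S)$ survive as operator statements; the strictness would be recovered by restricting to an exhausting sequence of finite connected induced subgraphs and using monotonicity of spectral radii of principal submatrices of nonnegative matrices together with a uniform gap obtained from a fixed finite non-bridge cycle of $\mathcal{G}$.
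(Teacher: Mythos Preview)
Your argument for finite graphs is correct and is genuinely different from the paper's. The paper does not use the incidence factorization; instead it starts from a Perron eigenvector $\epsilon$ of $A_{\mathcal{F}_\mathcal{G}}$ with eigenvalue $\lambda=\rho(\mathcal{F}_\mathcal{G})$, forms the vertex sums $y_i=\sum_{j\sim i}\epsilon_{ij}$, and shows these satisfy the Ihara--Bass--type identity
\[
\bigl[\lambda^2 I+(D-I)\bigr]y=\lambda\,A_\mathcal{G}\,y,
\]
with $D$ the degree matrix. Pairing with $y$ and using the Rayleigh bound $y^TA_\mathcal{G}y\le\rho(A_\mathcal{G})\|y\|^2$ together with $y^T(D-I)y>0$ yields $\lambda<\rho(A_\mathcal{G})$. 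Your route via $A_\mathcal{G}=ST^T$, $T^TS=A_{\mathcal{F}_\mathcal{G}}+J$, $\rho(ST^T)=\rho(T^TS)$, and strict Perron--Frobenius monotonicity on the irreducible line digraph is shorter and more elementary, since it bypasses the quadratic spectral relation entirely. What the paper's route buys in exchange is precisely that relation, which carries more quantitative information and ties directly into the Hashimoto/Ihara framework.

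Your caution about the infinite case is well placed, and in fact the specific patch you propose does not quite close the gap. The strict monotonicity ``$0\le M'\le M$, $M'\ne M$, $M$ irreducible $\Rightarrow\rho(M')<\rho(M)$'' fails in infinite dimensions (removing a single edge from the bi-infinite path leaves the $\ell^2$ spectral radius equal to $2$), so one cannot get strictness from a \emph{single} finite defect; and your suggested ``uniform gap from a fixed finite non-bridge cycle'' is unavailable when $\mathcal{G}$ is an infinite tree, which the theorem is meant to cover. The paper's own proof is also not fully rigorous here: it tacitly assumes a Perron eigenvector for $A_{\mathcal{F}_\mathcal{G}}$ and then applies a Rayleigh-quotient inequality that requires $y\in\ell^2$, neither of which is automatic for infinite quasi-transitive graphs. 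So both arguments are clean in the finite case and leave comparable work to be done in the infinite one; a correct extension would need to fix a consistent notion of spectral radius (e.g.\ the $\ell^2$ operator norm throughout) and argue with approximate eigenvectors rather than exact ones.
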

\begin{proof}
  A non-empty graph contains at least one edge (or a loop), thus
  $\rho({\cal G})>0$; we only need to consider the case where
  $\rho({\cal F}_{\cal G})>0$.  Begin with Eq.~(\ref{eq:ev}) and
  assume $\epsilon_{ij}\ge0$ is the non-zero eigenvector corresponding
  to the eigenvalue $\lambda\equiv \rho({\cal F}_{\cal G})>0$.
  Introduce vertex variables
  \begin{equation}
    \label{eq:vertex-vars}
    y_i\equiv  \sum_{j:j\sim i}\epsilon_{ij},
  \end{equation}
  corresponding to the sum of $\epsilon_{ij}$ over all directed bonds
  leaving a given vertex $i$.  These variables
  satisfy\cite{Krzakalaa-etal-2013} 
\begin{equation}
  \label{eq:ev-A}
  \left[\lambda^2 I+{(D-I)}\right]y=\lambda\, Ay,
\end{equation}
where $D\equiv \diag (d_1,\ldots,d_n)$ is the diagonal matrix of
degrees, $I$ is the identity matrix, and $A\equiv A_{\cal G}$ is the
(symmetric) adjacency matrix of ${\cal G}$.  If we multiply
Eq.~(\ref{eq:ev-A}) by $y^T$ on the left, the r.h.s.\ does not exceed
$\lambda\rho(A)\|y\|^2$.  From the proof of Theorem \ref{th:existence}
it follows that there exists a vertex on the backbone of ${\cal G}$
with $d_v>1$ such that the corresponding component of $y$ is non-zero,
thus $y^T(D-I)y>0$; dropping this term gives $\lambda<\rho(A)$.
\end{proof}
\begin{corollary}
  The percolation threshold for any quasi-transitive graph that is not a
  finite tree satisfies Eq.~(\ref{eq:our-lower-bound}).
\end{corollary}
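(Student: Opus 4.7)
The plan is to obtain the Corollary as an immediate concatenation of the two results that immediately precede it in the paper, so there is essentially no new mathematical content to supply. Theorem \ref{th:threshold-bound} already states that for any quasi-transitive $\mathcal{G}$ which is not a finite tree, the site-percolation threshold obeys the first inequality $p_c(\mathcal{G}) \ge 1/\rho(\mathcal{F}_\mathcal{G})$. The unnumbered theorem just before the Corollary supplies, under its own hypotheses, the strict spectral inequality $\rho(\mathcal{G}) > \rho(\mathcal{F}_\mathcal{G})$. Inverting this gives the second inequality $1/\rho(\mathcal{F}_\mathcal{G}) > 1/\rho(\mathcal{G})$, and joining the two inequalities yields exactly Eq.~(\ref{eq:our-lower-bound}).

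First I would record the hypotheses needed for each ingredient and check they hold simultaneously. Theorem \ref{th:threshold-bound} needs $\mathcal{G}$ to be simple, quasi-transitive, and not a finite tree; these are precisely the hypotheses of the Corollary. The spectral radius theorem needs $\mathcal{G}$ to be a connected non-empty graph. Any quasi-transitive graph considered here is connected by convention, and a quasi-transitive graph that is not a finite tree must contain at least one edge (the only edgeless quasi-transitive graphs are disjoint unions of isolated vertices, and a single isolated vertex is a finite tree), so it is non-empty. Hence both theorems apply to the same $\mathcal{G}$.

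Next I would verify that the strict inequality on spectral radii survives inversion in a meaningful way. Since $\mathcal{G}$ is non-empty, $\rho(\mathcal{G}) > 0$; since $\mathcal{G}$ is not a finite tree, the backbone argument inside the proof of Theorem \ref{th:existence} guarantees $\rho(\mathcal{F}_\mathcal{G}) \ge 1 > 0$, so both reciprocals are well-defined positive numbers and the strict inequality $\rho(\mathcal{G}) > \rho(\mathcal{F}_\mathcal{G})$ passes to $1/\rho(\mathcal{F}_\mathcal{G}) > 1/\rho(\mathcal{G})$ without degeneration.

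There is no real obstacle; the only place one has to exercise a little care is the hypothesis bookkeeping just described, so that neither inequality is vacuous and both apply to the same class of graphs. With that check in hand, the full chain $p_c(\mathcal{G}) \ge 1/\rho(\mathcal{F}_\mathcal{G}) > 1/\rho(\mathcal{G})$ is exactly Eq.~(\ref{eq:our-lower-bound}), completing the proof.
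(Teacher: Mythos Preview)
Your proposal is correct and matches the paper's intent: the Corollary is stated without proof precisely because it is the immediate concatenation of Theorem~\ref{th:threshold-bound} with the spectral-radius comparison theorem that precedes it. Your hypothesis bookkeeping (connectedness, non-emptiness, positivity of both spectral radii) is a welcome bit of extra care but does not depart from the paper's approach.
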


In conclusion, we constructed an exact expression for the threshold of
site percolation on an arbitrary quasi-transitive tree, and an
associated exact lower bound for such a threshold on an arbitrary
graph.  These are given by the inverse spectral radius of the oriented
line graph associated with the tree or the graph, respectively.  The
constructed bound accounts for local structure of the graph, and is
asymptotically exact for graphs with no short loops.  For
degree-regular graphs it goes over into the known lower bound
(\ref{eq:max-deg-bound}).  We also demonstrated that the inverse
spectral radius of the original graph ${\cal G}$ which was suggested
previously as an estimate for the percolation threshold is always
strictly smaller than our lower bound, see
Eq.~(\ref{eq:our-lower-bound}).  In applications, spectral radius for
sparse graphs involving billions of edges can be readily evaluated
using standard numerical packages.

Our results can be easily extended to the cases of Bernoulli (bond),
combined site-bond, or non-uniform percolation, where the
probabilities to have an open vertex may differ from site to site.  A
similar technique can also be used to prove the conjecture on the
location of the threshold for vertex-dependent percolation on directed
graphs\cite{Restrepo-Ott-Hunt-2008}.

{\bf Acknowledgments.}  This work was supported in part by the
U.S. Army Research Office under Grant No.\ W911NF-11-1-0027, and by
the NSF under Grant No.\ 1018935.  LP also acknowledges hospitality by
the Institute for Quantum Information and Matter, an NSF Physics
Frontiers Center with support of the Gordon and Betty Moore
Foundation.

%

\end{document}